\newcommand{\tss}[1]{\textsuperscript{#1}}
\newcommand{\canony}{CA\tss{3}NONY }
\theoremstyle{definition}
\theoremstyle{remark}
\DeclareMathOperator*{\argmin}{arg\,\min}
\begin{document}

\title{Courtesy as a Means to Coordinate}

\author{Panayiotis Danassis}
\affiliation{%
  \institution{\'Ecole Polytechnique F\'ed\'erale de Lausanne (EPFL) \\ Artificial Intelligence Laboratory}
  \city{Lausanne} 
  \country{Switzerland} 
  \postcode{CH-1015}
}
\email{panayiotis.danassis@epfl.ch}
\author{Boi Faltings}
\affiliation{%
  \institution{\'Ecole Polytechnique F\'ed\'erale de Lausanne (EPFL) \\ Artificial Intelligence Laboratory}
  \city{Lausanne} 
  \country{Switzerland} 
  \postcode{CH-1015}
}
\email{boi.faltings@epfl.ch}

\begin{abstract}
	We investigate the problem of multi-agent coordination under rationality constraints. Specifically, role allocation, task assignment, resource allocation, etc. Inspired by human behavior, we propose a framework (CA\tss{3}NONY) that enables fast convergence to efficient and fair allocations based on a simple convention of courtesy. We prove that following such convention induces a strategy which constitutes an $\epsilon$-subgame-perfect equilibrium of the repeated allocation game with discounting. Simulation results highlight the effectiveness of \canony as compared to state-of-the-art bandit algorithms, since it achieves more than two orders of magnitude faster convergence, higher efficiency, fairness, and average payoff.
\end{abstract}


\keywords{[Learning and Adaptation] Multiagent learning; [Economic Paradigms] Noncooperative games: theory \& analysis}  

\maketitle

\section{Introduction} \label{Introduction}

In multi-agent systems (MAS), agents are often called upon to implement a joint plan in order to maximize their rewards. Typically, coordination in a joint plan incorporates (possibly a combination of) two distinct elements: agents may be required to take the same action \cite{schelling1960strategy,cooper_1999}, or agents may be required to take distinct actions \cite{grenager2002dispersion,cigler2013decentralized}. The latter is ubiquitous in everyday life, e.g. managing common-pool resources, or deciding on car/ship/airplane routes in traffic management. This paper studies coordination in repeated allocation games. Consider for example the problem of channel allocation in wireless networks \cite{6476077}. In such a problem, $N$ wireless devices contend for $R$ transmission slots (i.e. particular frequency band in a certain period of time), where $N \gg R$. If more than one agent access a slot simultaneously, a collision occurs and the colliding parties incur a cost $\zeta < 0$. The goal then for the agents is to transmit on different slots to minimize collisions over time. Other scenarios include role allocation (e.g. teammates during a game), task assignment (e.g. employees of a factory), resource allocation (e.g. parking spaces and/or charging stations for autonomous vehicles), etc. What follows is applicable to \emph{any such scenario}. For simplicity hereafter we will refer only to indivisible resources (i.e. resources which can not be shared). Beyond the scope of repeated games, the proposed framework can also be applied as a negotiation protocol in one-shot interactions. E.g. self-driving vehicles attempting to get a parking space can utilize such a protocol in a simulated environment with message exchange. Conforming to the relevant literature, in what follows we refer to the coordination problem of selecting distinct actions as the anti-coordination problem \cite{grenager2002dispersion,SSS1817485,cigler2013decentralized,bramoulle2004network,kun2013anti}.

The most straightforward solution would be to have a central coordinator with complete information recommend an action to each agent. This, though, would require the agents to communicate their preferences/plans, which creates high overhead and raises incentive compatibility and truthfulness issues. Moreover, in real-world applications with partial observability, agents might not be willing to trust such recommendations. On the other hand, agents can learn to anti-coordinate their actions in a completely decentralized manner. However, in a fully decentralized scheme, agents have an incentive to 'bully' others (stick to your action until you drive out the competition) \cite{Littman2002}, which makes it impossible to converge to solutions that are both efficient and fair. As such, a mechanism is needed that employs some sort of external authority. In this paper, we employ simple monitoring authorities (MA), with the primary goal to keep track of successful accesses. We do not require any planning capabilities by the MAs, nor any knowledge of the plans and preferences of the participants; only the ability to monitor successful accesses on their respective resource. Such MAs already exist naturally in many domains (e.g. the port authority in maritime traffic management \cite{AAAI1817116}, or the access point in wireless networks, etc.), and can be easily introduced in the rest (e.g. an agent monitoring each charging / parking station, etc.). At the end, agents learn to anti-coordinate their actions in a decentralized manner, while the introduced MAs allow us to impose quotas to the use of resources and punishments upon violating them to align individual incentives with an efficient and fair correlated equilibrium.

A second aspect of anti-coordination problems involves convergence speed. Inter-agent interactions often need to take place in an ad-hoc fashion. Typical approaches (e.g. Monte Carlo algorithms, Bayesian learning, bandit algorithms, etc.) tend to require too many rounds to converge to be feasible in dynamic environments and real-life applications. Yet, humans are able to routinely coordinate in such an ad-hoc fashion. One key concept that facilitates human ad-hoc (anti-)coordination is the use of \emph{conventions} \cite{lewis2008convention}. Behavioral conventions are a fundamental part of human societies, yet they have not appeared meaningfully in empirical modeling of MAS. Inspired by human behavior, we propose the adoption of a simple convention of \emph{courtesy}. Courtesy arises by the social nature of humans. Society demands that an individual should conduct himself in consideration of others. This allows for fast convergence, albeit it is not game theoretically sound; people adhere to it due to social pressure. Such problems become even more severe in situations with scarcity of resources. Under such conditions, courtesy breaks down and in the name of self-preservation people exhibit urgency and competitive behavior \cite{doi:10.1080/09593969.2016.1147476}. Thus, to satisfy our rationality constraint (no incentive to deviate for self-interested agents) in an artificial system, we need a deterrent mechanism.

In this paper we present a framework (CA\tss{3}NONY: \textbf{C}ontextual \textbf{A}nti-coordination in \textbf{A}d-hoc \textbf{Anony}mous games) for repeated allocation games with discounting. \canony reproduces courtesy based on the allocation algorithm of \cite{cigler2013decentralized}, and uses monitoring and punishments as a deterrent mechanism. It exhibits \emph{fast convergence}, and high \emph{efficiency} and \emph{fairness}, while relying only on occupancy feedback which facilitates scalability and does not require any inter-agent communication. The main contributions are:

\begin{itemize}
	\item Introduction of an anti-coordination framework (CA\tss{3}NONY) which consists of a \emph{courteous convention} and a \emph{monitoring scheme}.
	\item Proof that under such a framework, the use of the courteous convention induces strategies that constitute an approximate \emph{subgame-perfect equilibrium}.
	\item Comparison to state-of-the-art bandit algorithms.
\end{itemize}

\section{Related Work} \label{Related Work}

In ad-hoc multi-agent (anti-)coordination the goal is to design autonomous agents that achieve high flexibility and efficiency in a setting that admits no prior coordination between the participants \cite{AAAI10-adhoc}. Typical scenarios include the use of Monte Carlo algorithms \cite{barrett2017making}, Bayesian learning \cite{albrecht2016belief}, or bandit algorithms \cite{ijcai2017-24,barrett2011ad}. Traditionally, pure ad-hoc approaches suffer from slow learning \cite{crandall2014towards}, which makes pure ad-hoc coordination a very ambitious goal for real-life applications. In this paper we propose a middle-ground approach. Inspired by human ad-hoc coordination, we incorporate prior knowledge in the form of simple \emph{conventions}. The coordination can still be considered ad-hoc as it is not pre-programmed, rather it involves learning. This allows for faster convergence compared to pure ad-hoc approaches.

A convention is defined as a customary, expected and self-enforcing behavioral pattern \cite{young1996economics,lewis2008convention}. In MAS, there are two scopes through which we study conventions. First, a convention can be considered as a behavioral rule, designed and agreed upon ahead of time or decided by a central authority \cite{SHOHAM1995231,walker95understandingThe}. Second, a convention may emerge from within the system itself through repeated interactions \cite{mihaylov2014decentralized,walker95understandingThe}. The proposed courteous convention falls on the first category. It is incorporated as prior knowledge, and it is self-enforcing since the induced strategies constitute an approximate subgame-perfect equilibrium of the repeated allocation game.

An alternative way to model the anti-coordination problem is as a multi-armed bandit (MAB) problem \cite{Auer2002}. In MAB problems an agent is given a number of arms and at each time-step has to decide which arm to pull to get the maximum expected reward. Bandit (or no-regret) algorithms typically minimize the total regret of each agent, which is the difference between the expected received payoff and the payoff of the best strategy in hindsight. As such, they satisfy our rationality constraint since they constitute an approximate correlated or coarse correlated equilibrium \cite{nisan2007algorithmic,Roughgarden:2016:TLA:3092750,hart2000simple}. However, the studied problem presents many challenges: there is no stationary distribution (adversarial rewards), all agents are able to learn (similar to recursive modeling) which results to a moving-target problem, and yielding gives a reward of 0 (desirable option for minimizing regret, but not in respect to fairness). Moreover, regret minimization does not necessarily lead to payoff maximization \cite{crandall2014towards}. Nevertheless, due to their ability to learn from partial feedback, bandit algorithms constitute the natural choice for a pure ad-hoc approach. The latter motivates our choice to use them as a baseline, since our agents only receive binary feedback of success or failure upon taking their action.

Game theoretic equilibria are desirable (since they satisfy the rationality constraint), but hard to obtain. Deciding whether an anti-coordination (anonymous) game has a pure Nash equilibrium (NE) is NP-complete \cite{Brandt:2009:SCP:1501035.1501295}. Furthermore, allocation games often admit undesirable equilibria: pure NE which are efficient but not fair, or mixed-strategy NE which are fair but not efficient \cite{cigler2013decentralized}. Hence, iterative best-response algorithms are not satisfactory. On the other hand, correlated equilibria (CE) \cite{AUMANN197467} can be both efficient \& fair, while from a practical perspective they constitute perhaps the most relevant non-cooperative solution concept \cite{hart2000simple}. An optimal CE of an anonymous game may be found in polynomial time \cite{Papadimitriou:2008:CCE:1379759.1379762}. Moreover, it is possible to achieve a CE without a correlation device (central agent) \cite{foster1997calibrated,hart2000simple}, using the history of the actions taken by the opponents. However, we are interested in information-restrictive learning rules (i.e. completely uncoupled \cite{DBLP:journals/corr/Talebi13}), where each agent is only aware of his own history of action/reward pairs. Such an approach was applied in \cite{cigler2013decentralized} to design a decentralized algorithm for reaching efficient \& fair CE in wireless channel allocation games. Yet, while the algorithm reaches an equilibrium in a polynomial number of steps, cooperation to achieve this state is not rational. A self-interested agent could keep accessing a resource forever, until everyone else backs off (also known as `bully' strategy \cite{Littman2002}). In this paper, we build upon the ideas of \citeauthor{cigler2013decentralized} and develop an anti-coordination strategy that constitutes an approximate subgame-perfect equilibrium, i.e. cooperation with the algorithm is a best-response strategy at each sub-game of the original stage game, given any history of the play.

A generalization of anti-coordination games, called dispersion games, was described in \cite{grenager2002dispersion}. In a dispersion game, agents are able to choose from several actions, favoring the one that was chosen by the smallest number of agents (analogous to minority games \cite{challet2013minority}). The authors in \cite{grenager2002dispersion} define a maximal dispersion outcome as an outcome where no agent can switch to an action chosen by fewer agents. The agents themselves do not have any particular preference for the attained equilibrium. Contrary to that, we are interested in achieving an efficient and fair outcome. Besides, in many real world applications, the agents are indifferent to which role/task/resource they attain, as long as they receive one (e.g. wireless frequencies). Tackling dispersion games, and therefore non-binary utilities, remains open for future research.

A similar approach to ours was introduced for weighted matching in \cite{danassis_alma}. The authors propose a heuristic which is decentralized, requires only partial feedback, and has \emph{constant} in the total problem size running time, under reasonable assumptions on the preference domain of the agents. However, it is applicable in cooperative scenarios, and not in the presence of strategic agents.

\section{The \texorpdfstring{\canony}{CA3NONY} framework} \label{canony}

\subsection{The Repeated Allocation Game} \label{The Resource Allocation Problem}

Let a `resource' be any element that can be successfully assigned to only one agent at a time. At each time-step, $\mathcal{N} = \{1, \dots, N\}$ agents try to access $\mathcal{R} = \{1, \dots, R\}$ identical and indivisible resources, where possibly $N \gg R$. The set of available actions is denoted as $\mathcal{A} = \{Y, A_1, \dots, A_R\}$, where $Y$ refers to yielding and $A_r$ refers to accessing resource $r$. We assume that access to a resource is slotted and of equal duration \footnote{This is done to facilitate the proofs. Real world problems have this property anyway (e.g. access to radio channels, role allocation in a
plan, etc.), and the algorithms we compare to all work for slotted resources.}. A successful access yields a positive payoff, while no access has a payoff of 0. If more than one agent access a resource simultaneously, a collision occurs and the colliding parties incur a cost $\zeta < 0$. Thus, the agents only receive a binary feedback of success or failure. Let $a_{n}$ denote agent $n$'s action, and $a_{-n} = \times_{\forall n' \in \mathcal{N} \setminus \{n\}} a_{n'}$ the joint action for the rest of the agents. The payoff function is defined as:

\begin{equation} \label{Eq: payoff function}
	u_n(a_n, a_{-n}) =
	\begin{cases}
		0, & \text{if } a_n = Y \\
		1, & \text{if } a_n \neq Y \land a_i \neq a_n, \forall i \neq n \\
		\zeta, & \text{otherwise} \\
	\end{cases}
\end{equation}

We assume that rewards are discounted by $\delta \in (0, 1)$, and, conforming to real-world scenarios, that each agent $n$ is only aware of his own history of action/reward pairs.

Finally, we assume that the agents can observe side information from their environment at each time-step $t$. We call this side information context (e.g. time, date etc.). The agents utilize this context as a common signal in their decision-making process, a means to learn and anti-coordinate their actions. Let $\mathcal{K} = \{1, \dots, K\}$ denote the context space. The rationale behind the introduction of the common context is that, under completely uncoupled learning rules, having positive probability mass on undesirable actions (e.g. collisions) is unavoidable. Moreover, from a practical perspective, common environmental signals are amply available to the agents \cite{hart2000simple}. We do not assume any a priori relation between the context space and the problem. The only constraints are that the values should repeat periodically, and satisfy $K = \lceil N / R \rceil$.

The context signals could be produced either on the resource side (e.g. by the port authority in maritime traffic management, or the access point in wireless networks), or in a decentralized manner (e.g. in distributed networks with no authorities the senders can attach identifier signals to data traffic \cite{6476077}). Finally, in situations where communication is possible, the agents can agree upon the signal themselves by solving the distributed consensus problem.

\subsection{Adopted Convention} \label{Adopted Convention}

The adopted convention is based on the cooperative allocation algorithm of \cite{cigler2013decentralized}. Each agent $n$ has a strategy $g_n: \mathcal{K} \rightarrow \mathcal{A}$ that determines a resource to access at time-step $t$ after having observed context $k_t$. The strategy is initialized uniformly at random in $\mathcal{A}$. If $g_n(k_t) = A_r$, then agent $n$ accesses resource $r$. Otherwise, if $g_n(k_t) = Y$, the agent does not access a resource but instead chooses uniformly at random a resource $r$ to monitor for activity. If it is free, then the agent updates $g_n(k_t) \leftarrow A_r$ (see Alg. \ref{algo: canony}).

In \cite{cigler2013decentralized}, agents back-off probabilistically in case of a collision (set $g_n(k_t) \leftarrow Y$ with probability $p_{n_{backoff}}$). In such a setting, it is possible to reach a symmetric subgame-perfect equilibrium. But in order to actually play it, the agents need to be able to calculate it. It is not always possible to obtain the closed form of the back-off probability distribution of each resource. Furthermore, a self-interested agent could stubbornly keep accessing a resource forever, until everyone else backs off (`bully' strategy \cite{Littman2002}).

Instead, we adopted a simple convention where agents are being \emph{courteous}, i.e. if there is a collision, the colliding agents will back-off with some constant positive probability: $p_{n_{backoff}} = p > 0, \forall n \in \mathcal{N}$. Being courteous, though, does not satisfy the rationality constraint. However, a uniform distribution of resources is socially optimal (i.e. fair allocations maximize the social welfare). Hence, if we introduce quotas to the resources and punishments upon violating them, courtesy induces rational strategies. In the following sections we introduce a monitoring scheme and prove that the resulting strategy constitutes an $\epsilon$-subgame-perfect equilibrium.

\begin{algorithm}[!t]
	\caption{Learning rule.} \label{algo: canony}
	\begin{algorithmic}[1]
		\Require Initialize $g_n$ u.a.r. in $\mathcal{A}$. Set $accessed_n \leftarrow False$.

		\For{$k_t \in \mathcal{K}$}
			\State Agents observe context $k_t$
			\If{$g_n(k_t) = A_r$ \& $accessed_n = False$}
				\State Agent $n$ accesses resource $r$
				\If{Collision($r$)}
					\State Set $g_n(k_t) \leftarrow Y$ with probability $p_{backoff} > 0$
				\Else
					\State Set $accessed_n \leftarrow True$
				\EndIf
			\ElsIf {$g_n(k_t) = Y$}
				\State Agent $n$ monitors random resource $r \in \mathcal{R}$
				\If{Free($r$)}
					\State Set $g_n(k_t) \leftarrow A_r$
				\EndIf
			\EndIf
		\EndFor
		\State Set $accessed_n \leftarrow False$
	\end{algorithmic}
\end{algorithm}

\subsection{Rationality} \label{Rationality}

In order to ensure the proposed convention's rationality, the agents must be assured that they will eventually be successful, i.e. we must provide safeguards against the monopolization of resources. The proposed framework employs simple Monitoring Authorities (MA). The MAs do not require any planning capabilities, nor any knowledge of the agents' plans and preferences. Their primary goal is to keep track of successful accesses. Depending on the domain, MAs may already exist naturally, or can easily be introduced. Examples include centralized MAs like the port authority in maritime traffic management, or a set of decentralized MAs (one per resource) like the access points in wireless networks, or agents monitoring a charging / parking station. Their function is twofold. First, they could provide occupancy signals. Agents (e.g. \canony, bandit, or Q-learning agents) must be able to receive some form of feedback from their environment to inform collisions and detect free resources. This could be achieved (if possible) by the use of various sensors, or by receiving occupancy signals (e.g. 0, 1) from the MAs. Second, (which is their principal purpose) MAs deter agents from monopolizing resources to the point that each agent can access a resource only for one context value out of $K$. To achieve the latter, MAs must be able to keep track of successful accesses. Upon the violation of the imposed quotas, the framework is responsible to enforce the necessary punishments. Punishments are application specific. They can be individual (e.g. in a wireless scenario, if an agent transmits to some other than the designated channel, then his packets will no longer be relayed), or group punishments (e.g. if quotas are exceeded, access is denied for everyone). The imposed punishments make exceeding the quotas an irrational strategy (simulated in Alg. \ref{algo: canony} by the $accessed_n$ flag), which in turn aligns the individual incentives with an efficient \& fair correlated equilibrium.

\subsubsection{Access Monitoring} \label{Rationality: Access Monitoring}

The primary function of the employed MAs is the tracking of successful accesses by the agents. The latter can be achieved in various ways, depending on the domain. The simplest one would be to maintain a log with successful accesses per episode (period of context values), whether we are dealing with a centralized MA or set of decentralized MAs that are able to communicate to ensure coherence. For more involved scenarios (e.g. if we require agent anonymity per access) we may employ more complex schemes, e.g. using tokens, or artificial currency (solely as an internal mechanism). Note that the latter does not fence the resources (i.e. punishments are still required). In what follows, we present such a self-regulated monitoring scheme.

Initially all the agents that `buy-in' are issued the same amount $m \in \mathbb{R}$ of artificial cash (AC). This amount also corresponds to the initial fee for every resource $f_r \leftarrow m$. To allow access to resource $r$, the MA of $r$ charges $f_r$ units of AC, and monitors the event. If there was a successful access, the MA reimburses the amount of $(1 - \xi) f_r$ AC to the accessing agent, where $\xi \rightarrow 0 \in \mathbb{R}$ is a commission fee. Otherwise, the MA reimburses the full amount of $f_r$ AC to the colliding agents, so that they are able to try again for a different context value. Finally, after each episode, the MAs lower the fee to $f'_r \leftarrow (1 - \xi) f_r, \forall r \in \mathcal{R}$. After every successful access, the amount of AC that an agent possesses drops below the access fee of a resource. Waiting for the fee to drop to the point that $f_r = m / 2$ is not rational since, assuming $\xi \rightarrow 0$, the number of iterations required to allow accessing two resources at the same time will reach $\infty$. At that point the rest of the agents will have reached a correlated equilibrium and the adversarial agent will not have an incentive to access an additional resource, besides the one that corresponds to him, since it would result in a collision. If, due to implementation constraints, we can not select a small $\xi$, the MAs can change the artificial currency every $I$ episodes, invalidating the old one and again making such strategy irrational.

\subsection{Rate of convergence} \label{Rate of convergence}

\begin{theorem} \label{Th: Convergence Steps}
	In a repeated allocation game with $N$ agents and $R$ resources, the expected number of steps before Alg. \ref{algo: canony} converges to a correlated equilibrium is bounded by:

	\begin{equation} \label{Eq: Convergence Steps}
		\mathcal{O}\left( N \left( \log \left\lceil\frac{N}{R}\right\rceil  + 1 \right) \left( \log N + R \right) \right)
	\end{equation}
\end{theorem}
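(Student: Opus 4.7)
The plan is to decompose convergence into per-context sub-games and aggregate. For each context $k \in \mathcal{K}$, the evolution of the strategies $\{g_n(k)\}_{n \in \mathcal{N}}$ follows the backoff-and-monitor dynamics of Cigler \& Faltings 2013: colliders back off to $Y$ with constant probability $p$, and yielding agents claim free resources they detect by uniform monitoring. I would first show that, holding $k$ fixed and ignoring inter-context coupling, this sub-game converges in $\mathcal{O}(\log N + R)$ expected episodes. The argument splits into (i) a collision-resolution phase, in which the number of agents simultaneously accessing any given resource drops from $\Theta(N)$ to at most one in $\mathcal{O}(\log N)$ episodes, because with constant backoff probability the contender count contracts geometrically in expectation; and (ii) a slot-filling phase, in which the remaining at most $R$ still-empty resources get discovered by the pool of yielders via uniform monitoring, taking $\mathcal{O}(R)$ expected episodes by a coupon-collector-type computation.

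Next, I would aggregate across the $K = \lceil N/R \rceil$ contexts. Within one episode the main loop visits every $k_t \in \mathcal{K}$ exactly once, so each of the $K$ sub-games advances by one step per episode. Since the per-context stopping times have sufficiently light (essentially geometric) upper tails around their mean, the expected maximum over the $K$ sub-games multiplies the per-context bound by $\mathcal{O}(\log K + 1) = \mathcal{O}(\log \lceil N/R \rceil + 1)$, giving a total of $\mathcal{O}((\log N + R)(\log \lceil N/R \rceil + 1))$ expected episodes before all contexts have stabilized on a fair, collision-free allocation.

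Finally, I would convert episodes into timesteps. Each episode contributes one timestep per agent per context; accounting for the $N$ agents that must each be settled yields the $N$ factor in the target bound, producing $\mathcal{O}(N (\log \lceil N/R \rceil + 1)(\log N + R))$. I expect the main obstacle to be the rigorous handling of the coupling between contexts. The $accessed_n$ flag ties the $K$ sub-games together: once an agent is stably served in one context, it disappears as a contender from all others within the same episode, which is beneficial but must be justified formally. I would address this via a stochastic-dominance argument, coupling the true dynamics with an \emph{independent} family of $K$ per-context sub-games in which no agent is ever removed; showing that the true process dominates this pessimistic coupling componentwise lets the per-context and maximum bounds above carry over without the coupling weakening them. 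A secondary technical point is verifying that the initial uniform distribution over $\mathcal{A}$ produces, with probability $1 - o(1)$, at most $\mathcal{O}(N)$ initial contenders per context and at least $\Omega(N/R)$ initial yielders, so that both phases of the per-context analysis start from a configuration compatible with the stated rates.
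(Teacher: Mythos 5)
Your high-level decomposition (per-context sub-games, a max-over-contexts factor of $\mathcal{O}(\log K+1)$, and a back-off-driven collision-resolution analysis) is in the same spirit as the paper, which inherits the structure of Cigler and Faltings' Theorems 12--13 and sharpens one ingredient (a lower bound $\Omega\bigl(\tfrac{2(1-p)}{2-p}\bigr)$ on the probability that the per-resource back-off chain absorbs at ``one survivor'' rather than ``zero survivors''). However, there are two genuine gaps, and they happen to cancel numerically, which masks the problem. First, your per-context bound of $\mathcal{O}(\log N + R)$ episodes is not justified. The dynamics are not two clean sequential phases: with constant probability the back-off chain on a resource ends with \emph{all} contenders yielding, the resource empties, several yielders then claim it simultaneously, and collision resolution restarts. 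Bounding the expected number of restarts requires exactly the hitting-probability lemma the paper proves in its appendix, and because all $R$ resources draw on the same shared pool of yielders, the per-resource costs are summed rather than maximized. The resulting per-context bound is $\mathcal{O}\bigl(R\,\tfrac{2-p}{2(1-p)}(\tfrac{1}{p}\log N + R)\bigr)$ --- a factor of $R$ larger than yours.

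Second, your conversion from episodes to timesteps is incorrect. One pass of the main loop visits each $k_t\in\mathcal{K}$ once, so an episode is $K$ timesteps regardless of $N$; all $N$ agents move simultaneously within a timestep, so ``one timestep per agent per context'' is not a valid accounting and cannot legitimately produce the $N$ factor. Assembled honestly, your argument gives $\mathcal{O}\bigl(K(\log K+1)(\log N+R)\bigr)$ total timesteps, which is a factor of $R$ \emph{below} the claimed bound --- precisely the factor missing from your per-context analysis. The correct bookkeeping is: per-context cost $\mathcal{O}(R(\log N+R))$, times $\mathcal{O}(\log K + 1)$ for the slowest context, times $K$ timesteps per episode, and $KR = \Theta(N)$ yields the stated bound. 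Your coupling idea for the $accessed_n$ flag and the remark on the initial configuration are reasonable but secondary; the proof does not stand without repairing the two items above.
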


\begin{proof}
	The adopted learning rule is based on the allocation algorithm of \cite{cigler2013decentralized}. For $N, R, K \geq 1$, and back-off probability $0 < p < 1$, the expected number of steps before the algorithm converges is bounded by (\ref{Eq: Convergence Steps Theorem 12}) \footnote{Slightly tighter bound on the convergence speed of the adopted learning rule, based on Theorems 12 and 13 of \cite{cigler2013decentralized}. See the appendix for the proof.}, which for a constant back-off probability and $K = \lceil N / R \rceil$ gives the required bound.

	\begin{equation} \label{Eq: Convergence Steps Theorem 12}
		\mathcal{O}\left( \left( K \log K + 2K \right) R \frac{2 - p}{2 (1 - p)} \left(\frac{1}{p} \log N + R \right)\right)
	\end{equation}
\end{proof}

\begin{corollary} \label{Corollary: optimal back-off probability}
	Under a common, constant back-off probability assumption, $p^* = 2 - \sqrt{2}$ minimizes the convergence time of Alg. \ref{algo: canony} in high congestion scenarios, i.e. $N \gg R$.
\end{corollary}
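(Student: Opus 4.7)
The plan is to isolate the dependence of the bound (\ref{Eq: Convergence Steps Theorem 12}) on $p$, then minimize it by elementary calculus. Everything in front of the factor $\frac{2-p}{2(1-p)}\bigl(\tfrac{1}{p}\log N + R\bigr)$ is independent of $p$, so it suffices to minimize
\[
    f(p) \;=\; \frac{2-p}{2(1-p)}\left(\frac{\log N}{p} + R\right)
\]
over $p\in(0,1)$.

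First, I would exploit the high-congestion assumption $N\gg R$. Thinking of $R$ as held fixed (or more generally $R=o(\log N)$, which is implied by $K=\lceil N/R\rceil$ being large), the term $\tfrac{\log N}{p}$ dominates $R$ for every $p$ bounded away from $1$. Under this regime $f(p)$ is asymptotically proportional to
\[
    h(p) \;=\; \frac{2-p}{p(1-p)},
\]
and minimising $f$ reduces to minimising $h$. (I would note that the $R$ tail, being $p$-independent up to the outer $\tfrac{2-p}{2(1-p)}$ factor, is strictly increasing on $(0,1)$ and so only pulls the optimum further towards small $p$; one must therefore check that the candidate minimiser of $h$ is not pushed out of $(0,1)$ by this correction, which is immediate since both pieces of $f$ blow up at the endpoints.)

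Next, solve $h'(p)=0$. Writing $h(p)=(2-p)/(p-p^2)$ and differentiating,
\[
    h'(p) \;=\; \frac{-(p-p^{2}) - (2-p)(1-2p)}{(p-p^{2})^{2}} \;=\; \frac{p^{2}-4p+2}{(p-p^{2})^{2}}.
\]
The quadratic $p^{2}-4p+2=0$ has roots $p=2\pm\sqrt{2}$, and only $p^{*}=2-\sqrt{2}\in(0,1)$ is admissible. A sign check on the numerator (negative for $p<2-\sqrt{2}$, positive for $p\in(2-\sqrt{2},1)$) confirms $p^{*}$ is the unique interior minimum and not a maximum or saddle.

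Finally, I would verify the high-congestion claim rigorously by observing that the $R$-contribution to $f$, namely $R\cdot\tfrac{2-p}{2(1-p)}$, is bounded on any compact subinterval of $(0,1)$ by $O(R)$, whereas the dominant term contributes $\Theta(\log N)$ with a multiplier that is strictly convex and minimised at $p^{*}$. Hence for $N/R$ sufficiently large, any deviation from $p^{*}$ strictly increases $f$, so $p^{*}=2-\sqrt{2}$ minimises the convergence-time bound as claimed. The only mildly subtle step is justifying the passage from minimising $f$ to minimising $h$, i.e.\ controlling the $R$ term uniformly in $p$ away from the endpoints; the rest is routine calculus.
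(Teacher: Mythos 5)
Your proposal takes essentially the same route as the paper, which simply states that in the high-congestion limit the $p$-dependent factor reduces to $\frac{2-p}{2(1-p)}\cdot\frac{1}{p}$ and reports its argmin as $2-\sqrt{2}$; you fill in the calculus and the justification for discarding the $R$ term, which is a welcome addition. One caveat: your derivative has a sign error --- the correct numerator is $-(p^{2}-4p+2)$, since $-(p-p^{2})-(2-p)(1-2p) = -p^{2}+4p-2$ --- and your subsequent sign analysis of $p^{2}-4p+2$ is also inverted (it is \emph{positive} for $p<2-\sqrt{2}$); the two slips cancel, so the conclusion that $h$ decreases on $(0,2-\sqrt{2})$ and increases on $(2-\sqrt{2},1)$ is correct, but as written each step is individually wrong. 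Also, $K=\lceil N/R\rceil\to\infty$ does not by itself imply $R=o(\log N)$ (take $N=2^{R}$), so the domination of $\frac{\log N}{p}$ over $R$ needs the stronger hypothesis you state first (e.g.\ $R$ fixed); the paper's own one-line proof glosses over this in the same way.
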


\begin{proof}
	According to bound (\ref{Eq: Convergence Steps Theorem 12}), in high congestion scenarios (i.e. $\frac{N}{R} = K \rightarrow \infty$), the common, constant back-off probability that minimizes the convergence time is:

	\begin{equation} \label{Eq: optimal back-off probability}
		p^* = \argmin \left( \frac{2 - p}{2 (1 - p)} \frac{1}{p} \right) = 2 - \sqrt{2}
	\end{equation}
\end{proof}

\subsection{Courtesy Pays Off} \label{Courtesy Pays Off}

In this section we prove that if the agents back-off with a constant positive probability $p_{backoff} > 0$, then Alg. \ref{algo: canony} induces a strategy that is an $\epsilon$-equilibrium.

Suppose that in a repeated allocation game with discounting ($\delta \in (0, 1)$) the agents who collide back-off with a constant probability $p_{backoff} > 0$. Let $\sigma^p$ denote the aforementioned strategy (courteous strategy), and $\sigma^*$ denote the optimal (best-response) strategy under the monitoring authorities (possibly better than a stage game NE). Moreover, let $U_n(\sigma, \sigma_{-n}, \delta) = \sum_{t = 0}^{\infty} \delta^t u_n(a_n^t, a_{-n}^t)$ denote the cumulative payoff of agent $n$ following strategy $\sigma$, assuming the rest of the agents follow the strategy $\sigma_{-n}$. The following theorem proves that starting at any time-step, agent $n$ does not gain more than $\epsilon$ by deviating to the optimal strategy $\sigma^*$.

\begin{theorem} \label{Th: epsilon equilibrium}
	Under a high enough discount factor, the courteous strategy $\sigma^p$ constitutes an approximate subgame-perfect equilibrium, i.e. $\forall \epsilon > 0$, $\exists \delta_0 \in (0, 1)$ such that $\forall \delta, \delta_0 \leq \delta < 1$:

	\begin{equation*}
		\mathds{E}[U_n(\sigma_n^p, \sigma_{-n}^p, \delta)] > (1 - \epsilon) \mathds{E}[U_n(\sigma_n^*, \sigma_{-n}^p, \delta)]
	\end{equation*}
\end{theorem}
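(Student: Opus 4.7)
The plan is to exploit the fact that the Monitoring Authorities cap the long-run payoff of \emph{any} strategy at one successful access per episode of $K = \lceil N/R \rceil$ context values, so that the only way $\sigma^*$ can outperform $\sigma^p$ is during the finite transient phase before $\sigma^p$ converges. For $\delta$ sufficiently close to $1$, the discounted value of the infinite post-convergence stream will dominate any finite head-start, forcing the ratio of the two expected cumulative payoffs arbitrarily close to $1$.

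The first step is to upper bound $\mathds{E}[U_n(\sigma_n^*, \sigma_{-n}^p, \delta)]$. Since the monitoring scheme of Section~\ref{Rationality} enforces a quota of one access per episode of length $K$, and the per-slot payoff is at most $1$, every best-response strategy collects at most one unit of payoff in each window of $K$ consecutive time-steps, so in the best case
\begin{equation*}
\mathds{E}[U_n(\sigma_n^*, \sigma_{-n}^p, \delta)] \;\leq\; \sum_{j=0}^{\infty} \delta^{jK} \;=\; \frac{1}{1 - \delta^K}.
\end{equation*}

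The second step is to lower bound $\mathds{E}[U_n(\sigma_n^p, \sigma_{-n}^p, \delta)]$ using Theorem~\ref{Th: Convergence Steps}. Let $T_{\mathrm{conv}}$ be the (random) number of steps before Alg.~\ref{algo: canony} converges, so $\mathds{E}[T_{\mathrm{conv}}] \leq T^{\star}$, where $T^{\star}$ denotes the bound (\ref{Eq: Convergence Steps}). Once the process has converged, agent~$n$ holds a unique resource in its designated context slot and is guaranteed payoff $1$ every episode; conditioning on $T_{\mathrm{conv}}$ gives
\begin{equation*}
\mathds{E}[U_n(\sigma_n^p, \sigma_{-n}^p, \delta)] \;\geq\; \frac{\mathds{E}[\delta^{T_{\mathrm{conv}}}]}{1 - \delta^K}.
\end{equation*}
Since $x \mapsto \delta^x$ is convex for $\delta \in (0,1)$, Jensen's inequality yields $\mathds{E}[\delta^{T_{\mathrm{conv}}}] \geq \delta^{\mathds{E}[T_{\mathrm{conv}}]} \geq \delta^{T^{\star}}$, where the last step uses that $\delta^x$ is also decreasing. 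Combining the two bounds gives
\begin{equation*}
\frac{\mathds{E}[U_n(\sigma_n^p, \sigma_{-n}^p, \delta)]}{\mathds{E}[U_n(\sigma_n^*, \sigma_{-n}^p, \delta)]} \;\geq\; \delta^{T^{\star}},
\end{equation*}
so choosing $\delta_0 = (1 - \epsilon)^{1/T^{\star}}$ makes the ratio exceed $1 - \epsilon$ for every $\delta \in [\delta_0, 1)$. For full subgame-perfection, the same argument replays from an arbitrary history at time $t_0$: the context cycles through $\mathcal{K}$, the MA quotas reset each episode, and the residual expected convergence time of $\sigma^p$ is still controlled by $T^{\star}$.

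The main obstacle I anticipate is rigorously justifying the uniform upper bound on $\sigma^*$: one must rule out any clever accumulation of successes (bulk access within an episode, hoarded artificial currency, early-episode bursts) by showing that every such deviation either triggers a punishment or produces a future resource loss whose discounted magnitude absorbs the short-term gain. A secondary subtlety is that the Jensen-type bound on $\mathds{E}[\delta^{T_{\mathrm{conv}}}]$ must apply uniformly over all on-path starting profiles; this requires verifying that the convergence-time bound from Theorem~\ref{Th: Convergence Steps} does not rely on a fresh uniform initialization of the $g_n$'s, or else absorbing any residual slack into the choice of $\delta_0$.
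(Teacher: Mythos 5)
Your overall architecture matches the paper's: cap the best response at one unit of payoff per episode of length $K$ to obtain the upper bound $1/(1-\delta^K)$, lower bound the courteous strategy by its discounted post-convergence stream, apply Jensen's inequality to $\delta^{T_{\mathrm{conv}}}$, and let $\delta \to 1^-$. There is, however, one genuine gap: your lower bound $\mathds{E}[U_n(\sigma_n^p,\sigma_{-n}^p,\delta)] \geq \mathds{E}[\delta^{T_{\mathrm{conv}}}]/(1-\delta^K)$ silently assumes the payoff accumulated during the transient (pre-convergence) phase is nonnegative. It is not: collisions cost $\zeta<0$, and under $\sigma^p$ the agent collides repeatedly while the back-off dynamics resolve, so the cumulative payoff over the first $T_{\mathrm{conv}}$ steps can be strictly negative. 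The correct lower bound carries an additional term of the form $\zeta\,(1-\delta^{\mathds{E}(X)K})/(1-\delta^K)$ --- this is exactly what the paper's bound (\ref{Eq: expected convention}) records --- so the payoff ratio is bounded below by $\zeta(1-\delta^{\mathds{E}(X)K})+\delta^{\mathds{E}(X)}$ rather than by $\delta^{T^{\star}}$ alone, and your explicit choice $\delta_0=(1-\epsilon)^{1/T^{\star}}$ does not suffice as stated.

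The gap is repairable without changing your argument's structure: the collision term vanishes as $\delta\to 1^-$ because $1-\delta^{\mathds{E}(X)K}\to 0$, so one can still pick $\delta_0$ such that, for all $\delta\geq\delta_0$, $\zeta(1-\delta^{T^{\star} K}) > -\epsilon/2$ and $\delta^{T^{\star}} > 1-\epsilon/2$; this restores an explicit (if messier) $\delta_0$. Two smaller remarks. First, your direct upper bound on $\mathds{E}[U_n(\sigma_n^*,\sigma_{-n}^p,\delta)]$ skips the paper's intermediate comparison with $\sigma_{-n}^*$; that is fine, since the quota caps any strategy profile. Second, the obstacle you flag about ruling out ``clever accumulation'' of accesses is indeed the load-bearing assumption behind the upper bound, but the paper likewise takes the MA quota as given rather than proving it closes every such loophole, so on that point your proof is no weaker than the original.
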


\begin{proof}
	Note that $\mathds{E}[U_n(\sigma_n^*, \sigma_{-n}^*, \delta)] \geq \mathds{E}[U_n(\sigma_n^*, \sigma_{-n}^p, \delta)]$, since by playing $\sigma_{-n}^p$ the rest of the agents can potentially introduce additional collisions. Thus, it suffices to prove that $\mathds{E}[U_n(\sigma_n^p, \sigma_{-n}^p, \delta)] > (1 - \epsilon) \mathds{E}[U_n(\sigma_n^*, \sigma_{-n}^*, \delta)]$.

	The introduced monitoring scheme prohibits the monopolization of resources, i.e. each agent can only access a resource for his corresponding context value. Thus, the best-response strategy's ($\sigma^*$) payoff for some $\delta$ is bounded by:

	\begin{equation} \label{Eq: expected best-response}
		\mathds{E}[U_n(\sigma_n^*, \sigma_{-n}^*, \delta)] \leq 1 + \delta^K + \delta^{2K} + \dots = \underset{i = 0}{\overset{\infty}{\sum}} \delta ^ {i K} = \frac{1}{1 - \delta^K}
	\end{equation}

	When agents adopt the courteous convention, in each round until the system converges to a correlated equilibrium, the agents receive a payoff between $\zeta < 0$ (collision cost) and 1. Thus, until convergence, the expected payoff is lower bounded by $\zeta \underset{i = 0}{\overset{\tau - 1}{\sum}} \delta ^ {i K} = \zeta \frac{1 - \delta^{\tau K}}{1 - \delta ^ K}$, where $\tau$ is the number of steps to converge. After convergence, their expected payoff is $\underset{i = 0}{\overset{\infty}{\sum}} \delta ^ {\tau + i K} = \frac{\delta^{\tau}}{1 - \delta ^ K}$. Hence, the convention induced strategy's payoff is at least:

	\begin{equation*} \label{Eq: expected intemrediate}
		\mathds{E}[U_n(\sigma_n^p, \sigma_{-n}^p, \delta)] \geq \underset{\tau = 1}{\overset{\infty}{\sum}}Pr[\text{conv. in $\tau$ steps}] \cdot \left( \frac{\zeta (1 - \delta^{\tau K}) + \delta^{\tau}}{1 - \delta ^ K} \right)
	\end{equation*}

	We can define a random variable $X$ such that $X = \tau$ if the algorithm converges after exactly $\tau$ steps. Since $\delta ^ x$ is a convex function we have that $\mathds{E}(\delta ^ x) \geq \delta ^ {\mathds{E}(x)}$, therefore:

	\begin{equation} \label{Eq: expected convention}
		\mathds{E}[U_n(\sigma_n^p, \sigma_{-n}^p, \delta)] \geq \frac{\zeta (1 - \delta^{\mathds{E}(X) K}) + \delta^{\mathds{E}(X)}}{1 - \delta ^ K}
	\end{equation}

	By dividing (\ref{Eq: expected convention}) by (\ref{Eq: expected best-response}) we get:

	\begin{equation} \label{Eq: ratio of expected payoff}
		\frac{\mathds{E}[U_n(\sigma_n^p, \sigma_{-n}^p, \delta)]}{\mathds{E}[U_n(\sigma_n^*, \sigma_{-n}^*, \delta)]} \geq \zeta (1 - \delta^{\mathds{E}(X) K}) + \delta^{\mathds{E}(X)}
	\end{equation}

	$\mathds{E}(X)$ does not depend on $\delta$. Moreover, $\delta^{\mathds{E}(X)}$ is continuous in $\delta$, monotonous, and $\underset{\delta \rightarrow 1^-}{\lim} \delta^{\mathds{E}(X)} = 1$. Thus, we can take the limit of (\ref{Eq: ratio of expected payoff}) as $\delta \rightarrow 1^-$, which equals to $\underset{\delta \rightarrow 1^-}{\lim} \frac{\mathds{E}[U_n(\sigma_n^p, \sigma_{-n}^p, \delta)]}{\mathds{E}[U_n(\sigma_n^*, \sigma_{-n}^*, \delta)]} = 1 $
\end{proof}

In order to guarantee rationality, the discount factor $\delta$ must be close to 1 since, as $\delta$ gets closer to 1, the agents do not care whether they access now or in some future round. Since the proposed monitoring scheme guarantees that every agent will access a resource for his corresponding context value, when $\delta \rightarrow 1$, the expected payoff for agents who are accessing a resource and for those who have not accessed a resource yet will be the same. In other words, the cost (overhead) of learning the correlated equilibrium decreases.

\subsection{Indifference Period} \label{Indifference Period}

In many real world applications, agents are indifferent in claiming a resource in a period of $T_{ind}$ rounds, i.e. $\delta_t = 1, \forall t \leq T_{ind}$. E.g. data of wireless transmitting devices might remain relevant for a specific time-window, during which the agent is indifferent of transmitting. In such cases, we can use the Markov bound to prove that with high probability the proposed algorithm will converge in under $T_{ind}$ time-steps, thus satisfying the rationality constraint. We assume the agents are willing to accept linear `delay' with regard to the number of resources $R$, the number of agents $N$, and the size of the context space $K$, specifically:

\begin{equation} \label{Eq: indifference delay}
	T_{ind} = \mathcal{O}\left( R N K \right)
\end{equation}

\begin{theorem} \label{Th: probability of convergence}
	Under a linear indifference period $T_{ind}$ (i.e. $\delta_t = 1, \forall t \leq T_{ind} = \mathcal{O}\left( R N K \right)$) the probability of the system of agents following Alg. \ref{algo: canony} not having converged during $T_{ind}$ diminishes as the congestion increases ($\frac{N}{R} \rightarrow \infty$). 
\end{theorem}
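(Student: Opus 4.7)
The plan is to apply Markov's inequality to the non-negative random variable $X$ counting the number of rounds Algorithm \ref{algo: canony} needs to reach a correlated equilibrium. Markov gives $\Pr[X > T_{ind}] \leq \mathds{E}[X] / T_{ind}$, so it suffices to show that this ratio vanishes as $N/R \to \infty$. Since the theorem is stated in asymptotic form, all that is needed is a bound on the expectation (which Theorem \ref{Th: Convergence Steps} already supplies) together with the assumed growth rate of $T_{ind}$; no tighter concentration inequalities are required.

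First I would substitute the expected convergence time from Theorem \ref{Th: Convergence Steps}, namely $\mathds{E}[X] = \mathcal{O}(N(\log K + 1)(\log N + R))$ with $K = \lceil N/R \rceil$, together with the assumed indifference bound $T_{ind} = \mathcal{O}(RNK)$ from (\ref{Eq: indifference delay}). The $N$ cancels, yielding $\mathds{E}[X]/T_{ind} = \mathcal{O}\!\left(\frac{(\log K + 1)(\log N + R)}{RK}\right)$. I would then split the numerator across $\log N + R$ to bound the ratio by $\frac{(\log K + 1)\log N}{RK} + \frac{\log K + 1}{K}$, and use $RK \geq N$ (an immediate consequence of $K = \lceil N/R \rceil$) together with $\log K \leq \log N$ on the first summand, obtaining an upper bound of the form $\mathcal{O}\!\left(\frac{(\log N)^2}{N}\right) + \mathcal{O}\!\left(\frac{\log K}{K}\right)$.

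It then remains to observe that the hypothesis $N/R \to \infty$ forces $K \to \infty$ and, since $K \leq N$, also $N \to \infty$, so both summands tend to zero and the Markov bound collapses to $0$. The main obstacle I anticipate is ensuring that the conclusion holds uniformly across \emph{all} admissible scalings of $R$ relative to $N$ consistent with $N/R \to \infty$, whether $R$ is held constant, grows poly-logarithmically, or grows at any strictly sub-linear rate. The split above is deliberately arranged to finesse this: the first summand is controlled purely by $N \to \infty$, the second purely by $K \to \infty$, and together these two conditions are implied by the single hypothesis $N/R \to \infty$, so no case analysis on the relative growth of $R$ is needed.
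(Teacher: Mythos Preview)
Your proposal is correct and follows the same approach as the paper: apply Markov's inequality with the expected convergence time from Theorem~\ref{Th: Convergence Steps}, divide by $T_{ind} = \mathcal{O}(RNK)$, and verify the resulting ratio tends to $0$ as $N/R \to \infty$. The paper's own proof is terser---it simply records the bound $\mathcal{O}\!\left(\frac{(\log\lceil N/R\rceil + 1)(\log N + R)}{N}\right)$ (using $RK \geq N$ implicitly) and asserts the limit is $0$---whereas you explicitly split the numerator and check each summand, which is a welcome clarification rather than a different argument.
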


\begin{proof}
	Using the Markov bound, it follows that the probability that the system takes more than the accepted number of steps ($T_{ind}$) to converge is:

	\begin{equation*} \label{Eq: probability of convergence}
		Pr[\neg \text{conv. after $T_{ind}$}] = \mathcal{O}\left( \frac{\left( \log \left\lceil\frac{N}{R}\right\rceil + 1 \right) \left( \log N + R \right)}{N} \right)
	\end{equation*}
	\noindent
	Taking the limit: $\underset{\frac{N}{R} \rightarrow \infty}{\lim} Pr[\neg \text{conv. after $T_{ind}$}] = 0$.
\end{proof}

Even though the Markov's inequality generally does not give very good bounds when used directly, Th. \ref{Th: probability of convergence} proves that our algorithm converges in the required time with high probability. The latter holds under high congestion, which constitutes the more interesting scenario since for small number of agents the required time to converge is only a few hundreds of time-steps. Moreover, the higher the indifference period, the higher the probability to converge under such time-constraint. For quasilinear indifference period for example, the system converges in the required time-window with high probability even for a small number of agents. We can further strengthen our rationality hypothesis by using a tighter bound (e.g. Chebyshev's inequality), albeit computing the theoretical variance of the convergence time is an arduous task, thus it remains open for future work.

\section{Evaluation} \label{Experimental Evaluation}

In this section we model the resource allocation problem as a multi-armed bandit problem and provide simulation results of CA\tss{3}NONY's performance in comparison to state-of-the-art, well established bandit algorithms, namely the EXP4 \cite{auer2002nonstochastic}, EXP4.P \cite{beygelzimer2011contextual}, and EXP3 \cite{auer2002nonstochastic}. In every case we report the average value over 128 runs of the same simulation. For the EXP family of algorithms, the input parameters are set to their optimal values, as prescribed in the aforementioned publications. We assume a reward of $1$ for a successful access, $-1$ if there is a collision, and $0$ if the agent yielded.

\subsubsection{Level of Courtesy} \label{Level of Courtesy}

We evaluated different back-off probabilities ($p_{backoff} \in \{0.1, 0.25, 2-\sqrt{2}, 0.75, 0.9\}$) for $R = K \in \{2, 4, 8, 16\}$ and $N = R \times K$. There was no significant difference in convergence time, but since $p_{backoff}$ is directly correlated with the number of collisions, it can have significant impact on the average payoff (see Table \ref{table: backoff probability}). It is important to note, though, that agents do not have global nor local knowledge of the level of congestion of the system (they only receive binary occupancy feedback for one resource per time-step). Thus, agents can not select the optimal back-off probability depending on congestion. In what follows, the back-off probability of \canony is set to $p_{backoff} = 2 - \sqrt{2}$ of Eq. \ref{Eq: optimal back-off probability}, which is only optimal under high congestion settings (i.e. $\frac{N}{R} = K \rightarrow \infty$), yet constitutes a safe option. Note that the provided theoretical analysis holds for any constant back-off probability.

\begin{table}[!t]
	\centering
	\caption{Avg. payoff depending on the level of courtesy (back-off probability), $K = R, N = R \times K$.}
	\begin{tabular*}{1\linewidth}{@{\extracolsep{\fill}}rcccc}
	\hline 
	$p_{backoff}$   & $R = 2$ & $R = 4$ & $R = 8$ & $R = 16$ \\ \hline
	$0.1$           & 37.6    & -4.7    & -39.5   & -63.1    \\
	$0.25$          & 43.6    & 10.0    & -16.9   & -40.0    \\
	$2 - \sqrt{2}$  & 45.7    & 15.6    & -5.2    & -23.0    \\
	$0.75$          & 45.5    & 15.7    & -3.7    & -20.6    \\
	$0.9$           & 44.2    & 12.5    & -5.8    & -19.4    \\
	\hline
	\end{tabular*}
	\label{table: backoff probability}
\end{table}

\subsubsection{Bandits \& Monitoring} \label{Bandits & Monitoring}

\canony has three meta actions (Access, Yield, and Monitor $\{A, Y, M\}$), while bandit algorithms have only two ($\{A, Y\}$), and \canony assumes the existence of monitoring authorities (MAs). For fairness' sake in the reported results, we made the following two modifications. First, we include a variation of \canony, denoted as CA\tss{3}NONY*, where we assume agents incur a cost (equal to collision cost $\zeta$) every time they monitor a resource (reflected in Table \ref{table: payoff}). Second, all the employed bandit algorithms make use of the MAs, which indirectly grants them the the ability to monitor resources as well. More specifically, the accumulated payoff is updated only if they are allowed to access a resource. If not, we consider it a monitoring action, which means the agents still receive occupancy feedback. The latter is important, otherwise the bandit algorithms would require significantly longer time to converge. Note that, contrary to CA\tss{3}NONY*, monitoring is free for bandit algorithms with respect to the accumulated payoff, i.e. they do not incur a collision cost.

\subsection{Employed Bandit Algorithms} \label{Employed Bandit Algorithms}

The reward of each arm does not follow a fixed probability distribution (adversarial setting). Moreover, the agents are able to observe side-information (context) at each time-step $t$. The arm that yields the highest expected reward can be different depending on the context. Hence we focus on adversarial contextual bandit algorithms (see \cite{zhou2015survey} for a survey). A typical approach is to use expert advice. In this method we assume a set of experts $\mathcal{M} = \{1, \dots, M\}$ who generate a probability distribution on which arm to pull depending on the context. A no-regret algorithm performs asymptotically as well as the best expert. Such algorithms are the EXP4 and EXP4.P, the difference being that EXP4 exhibits high variance \cite{zhou2015survey}, while EXP4.P achieves the same regret with high probability by combining the confidence bounds of UCB1 \cite{Auer2002} and EXP4. The computational complexity and memory requirements of the above algorithms are linear in $M$, making them intractable for large number of experts. In order to deal with the increased complexity in larger simulations, we gave an edge to these algorithms by restricting the set of experts $\mathcal{M}$ to the same uniform correlated equilibria (CE) that \canony converges to. The latter enabled us to perform larger simulations (the unrestricted versions could not handle $N \geq 64$), while resulting in the same order of magnitude convergence time, slightly higher average payoff, and significantly higher fairness ($> 35\%$ since by design the experts proposed fair CE). Alternatively, we can use non-contextual adversarial bandit algorithms, such as the EXP3. Moreover, we can convert EXP3 to a contextual algorithm by setting up a separate instance for all $k \in \mathcal{K}$. We call this `CEXP3'. This results in a contextual bandit algorithm which has the edge over EXP4 from an implementation viewpoint since its running time at each time-step is $\mathcal{O}(R)$ and its memory requirement is $\mathcal{O}(KR)$ (CA\tss{3}NONY's is $\mathcal{O}(1)$ and $\mathcal{O}(K)$ respectively).

\subsection{Simulation Results} \label{Simulation Results}

\subsubsection{Convergence Speed \& Efficiency} \label{Convergence Speed & Efficiency}

We know that \canony converges to a CE which is efficient. If all agents follow the courteous convention of Alg. \ref{algo: canony}, the system converges to a state where no resources remain un-utilized and there are no collisions (Th. 13 of \cite{cigler2013decentralized}). Furthermore, Th. \ref{Th: Convergence Steps} argues for fast convergence. The former are both corroborated by Fig. \ref{fig: utilization_R4N16K4} (similar results ($> \times 10^2$ faster convergence) were acquired for $R \in \{2, 8, 16\}$ as well.). Fig. \ref{fig: utilization_R4N16K4} depicts the total utilization of resources for a simulation period of $T = 10^6$ time-steps. Note that the $x$-axis is in logarithmic scale. \canony converges significantly ($> \times 10^2$) faster than the bandit algorithms to a state of 100\% efficiency. On the other hand, the bandit algorithms exhibit high variance, never achieve 100\% efficiency, and are not able to handle efficiently the increase in context space size and number of resources.

\begin{figure}[!t]
	\centering
	\includegraphics[width = 1 \linewidth]{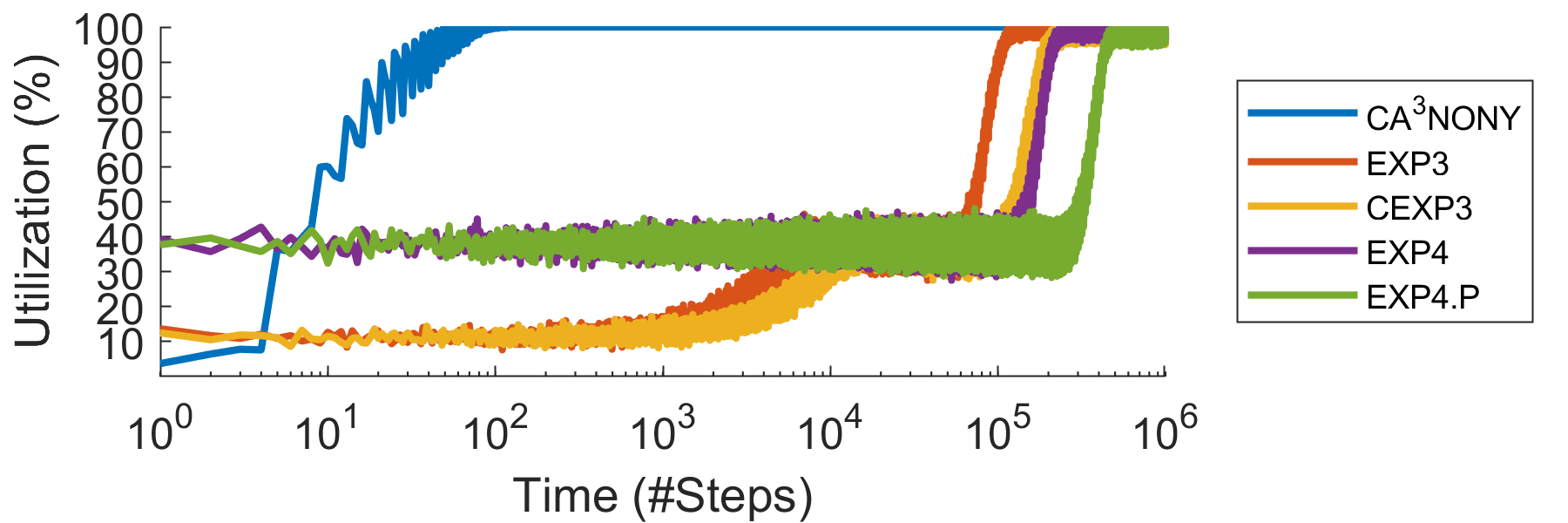}
	\caption{Utilization: $R = 4, N = 16$ ($x$-axis in log scale).}
	\label{fig: utilization_R4N16K4}
\end{figure}

\subsubsection{Fairness} \label{Fairness}

The usual predicament of efficient equilibria for allocation games is that they assign the resources only to a fixed subset of agents, which leads to an unfair result (e.g. an efficient pure NE (PNE) is for $R$ agents to access and $N - R$ agents to yield). This is not the case for \canony, which converges to an equilibrium that is not just efficient but fair as well. Due to the enforced monitoring scheme, all agents acquire the same amount of resources. As a measure of fairness, we will use the Jain index \cite{DBLP:journals/corr/cs-NI-9809099}. The Jain index is widely used by economists, and exhibits a lot of desirable properties such as: population size independence, continuity, scale and metric independence, and boundedness. For an allocation game of $N$ agents, such that the $n^{\text{th}}$ agent receives an allocation of $x_n$, the Jain index is given by $\mathds{J}(\mathbf{x}) = \left|\sum_{n \in \mathcal{N}} x_n\right| ^ 2 \mathbin{/} N \sum_{n \in \mathcal{N}} x_n ^ 2$. An allocation is considered fair, iff $\mathds{J}(\mathbf{x}) = 1$, $\mathbf{x} = (x_1, \dots, x_N) ^\top$. 

Table \ref{table: fairness} presents the expected Jain Index of the evaluated algorithms at the end of the time horizon $T$. \canony converges to a fair equilibrium, achieving a Jain index of 1. The EXP4 and EXP4.P were the fairest amongst the bandit algorithms, achieving a Jain index of close to 1. This is to be expected since the set of experts $\mathcal{M}$ is limited to the same set of equilibria that \canony converges to. On the other hand, EXP3 and CEXP3 performed considerably worse, with the EXP3 exhibiting the worst performance in terms of fairness, equal to a PNE's: $\mathds{J}_{PNE}(\mathbf{x}) = \frac{R^2}{NR} = \frac{1}{K} = \mathds{J}_{EXP3}(\mathbf{x})$.

\subsubsection{Average Payoff} \label{Average Payoff}

The average payoff corresponds to the total discounted payoff an agent would receive in the time horizon $T$. Table \ref{table: payoff} presents the average payoff for the studied algorithms. The clustered pairs of bandit algorithms exhibited $< 1\%$ difference, hence we included the average value of the pair. The discount factor was set to $\delta = 0.99$. At the top half we do not assume any indifference period, while at the bottom half we assume linear indifference period ($\delta_t = 1, \forall t \leq T_{ind}$, where $T_{ind}$ is given by Eq. \ref{Eq: indifference delay}). Recall that \canony and the bandit algorithms do not incur any cost for monitoring resources, while CA\tss{3}NONY* incurs a cost equal to the collision cost $\zeta$.

Once more, \canony (and even CA\tss{3}NONY*) significantly outperforms all the bandit algorithms. The latter have relatively similar performance, with EXP4 and EXP4.P being the best amongst them. It is worth noting that adding an indifference period has a dramatic effect on the results. Comparing \canony to bandit algorithms we observe that \canony achieves a large increase on average payoff, while the opposite happens for the bandit algorithms. This is because the learning rule of Alg. \ref{algo: canony} prohibits from accessing an already claimed resource, thus minimizing collisions. On the contrary, bandit algorithms constantly explore (they assign a positive probability mass to every arm) which leads to collisions. In a multi-agent system where every agent learns this can have a cascading effect. The latter becomes apparent when fixing $\delta = 1$ for $T_{ind}$ steps. The collision cost remains high for longer which, as seen by Table \ref{table: payoff}, has a significant impact on the bandit algorithms' performance.

\begin{table}[!t]
	\centering
	\caption{Fairness (Jain Index), $K = R, N = R \times K$.}
	\begin{tabular*}{1\linewidth}{@{\extracolsep{\fill}}rcccc}
	\hline 
	          & $R = 2$ & $R = 4$ & $R = 8$ & $R = 16$ \\ \hline
	\canony   & 1.0000  & 1.0000  & 1.0000  & 1.0000   \\
	EXP3      & 0.5000  & 0.2500  & 0.1250  & 0.0625   \\
	CEXP3     & 0.7018  & 0.5865  & 0.5341  & 0.9638   \\
	EXP4      & 1.0000  & 0.9999  & 0.9959  & 0.9198   \\
	EXP4.P    & 1.0000  & 0.9999  & 0.9798  & 0.7503   \\
	\hline
	\end{tabular*}
	\label{table: fairness}
\end{table}

\begin{table}[!t]
	\centering
	\caption{Average Payoff, $K = R, N = R \times K$.}
	\begin{tabular*}{1\linewidth}{@{\extracolsep{\fill}}rcccc}
	\hline 
	                & $R = 2$            & $R = 4$                & $R = 8$                    & $R = 16$                     \\ \hline
	\canony         & \phantom{\:\:}45.4 & \phantom{\:\:\:\:}15.7 & \phantom{\:\:\:\:\:\:}-5.3 & \phantom{\:\:\:\:\:\:}-23.0  \\
	CA\tss{3}NONY*  & \phantom{\:\:}20.0 & \phantom{\:\:}-22.4    & \phantom{\:\:\:\:}-50.6    & \phantom{\:\:\:\:\:\:}-72.6  \\
	(C)EXP3         & -72.4              & \phantom{\:\:}-93.1    & \phantom{\:\:\:\:}-99.8    & \phantom{\:\:\:\:}-100.0     \\
	EXP4(.P)        & -59.0              & \phantom{\:\:}-81.0    & \phantom{\:\:\:\:}-90.7    & \phantom{\:\:\:\:\:\:}-95.4  \\ \cline{1-1}
	\canony         & \phantom{\:}53.3   & \phantom{\:\:\:\:}79.0 & \phantom{\:\:\:}502.5      & \phantom{\:\:\:\:}4057.5     \\
	CA\tss{3}NONY*  & \phantom{\:}23.8   & \phantom{\:\:}-55.5    & -1337.1                    & -26723.9                     \\
	(C)EXP3         & -84.0              & -331.0                 & -4175.7                    & -60595.1                     \\
	EXP4(.P)        & -68.4              & -288.4                 & -3807.1                    & -62631.5                     \\
	\hline
	\end{tabular*}
	\label{table: payoff}
\end{table}

\subsubsection{Large Scale Systems} \label{Large Scale Systems}

The innovation of \canony stems from the adoption of a simple convention, which allows its applicability to large scale MAS. To evaluate the latter, Fig. \ref{fig: convergence_time_increasing_R} and \ref{fig: convergence_time_increasing_K} depict the convergence time for increasing number of resources $R$, and increasing system congestion ($\frac{N}{R} = K$) respectively. Both graphs are in a double logarithmic scale, and the error bars represent one standard deviation of uncertainty. The total number of agents is given by $N = R \times K$. Thus, the largest simulations involve $16.384$ agents. Along with CA\tss{3}NONY, we depict the fastest (based on the previous simulations) of the bandit algorithms, namely EXP3. In both cases we acquire $\times 10^3 - \times 10^5$ faster convergence. The above validate CA\tss{3}NONY's performance in both scenarios with abundance ($N \approx R$ or small $K$), and scarcity of resources ($N \gg R$ or large $K$). As depicted, \canony is significantly faster than the EXP3 and can gracefully handle increasing number of resources, and high congestion. Finally note that, in several of the simulations, EXP3 was unable to reach its convergence goal of $90\%$ efficiency (utilization of resources) in a reasonable amount of computation time ($1.5 \times 10^8$ time-steps), hence the resulting gaps in EXP3's lines in Fig. \ref{fig: convergence_time_increasing_R} and \ref{fig: convergence_time_increasing_K}. Especially in situations with scarcity of resources the utilization was significantly lower.

\begin{figure}[!t]
	\centering
	\includegraphics[width = 1 \linewidth]{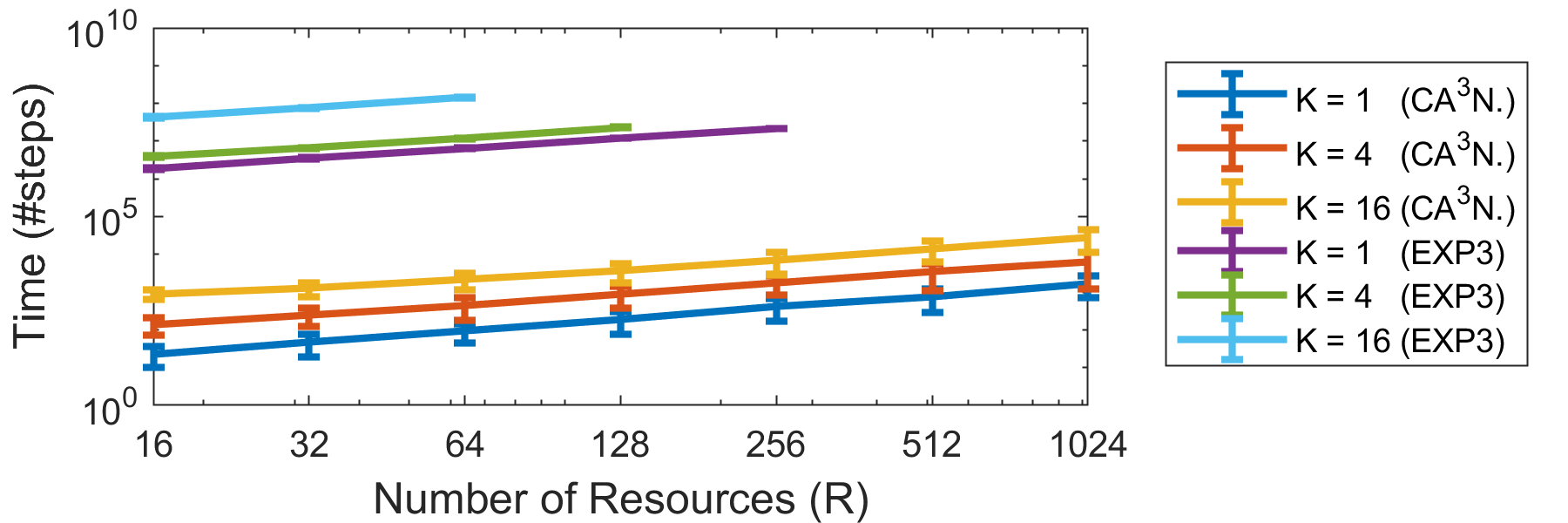}
	\caption{Convergence time: increasing \#resources $R$, varying context space size $K$, $N = R \times K$ (double log scale).}
	\label{fig: convergence_time_increasing_R}
\end{figure}

\begin{figure}[!t]
	\centering
	\includegraphics[width = 1 \linewidth]{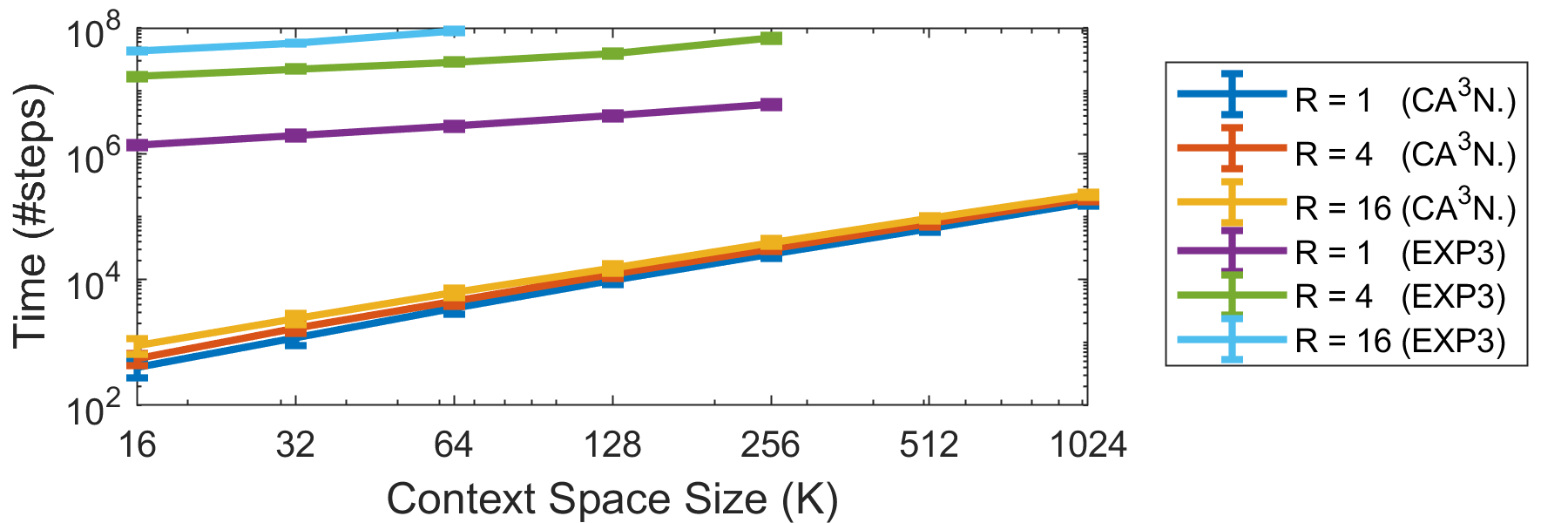}
	\caption{Convergence time: increasing congestion ($\frac{N}{R} = K$), varying \#resources $R$, $N = R \times K$ (double log scale).}
	\label{fig: convergence_time_increasing_K}
\end{figure}

\section{Conclusion} \label{Conclusion}

In this paper we proposed CA\tss{3}NONY, an anti-coordination framework under rationality constraints. It is based on a simple, human-inspired convention of courtesy which prescribes a positive back-off probability in case of a collision. Coupled with a monitoring scheme which deters the monopolization of resources, we proved that the induced strategy constitutes an $\epsilon$-subgame-perfect equilibrium. We compared \canony to state-of-the-art bandit algorithms, namely EXP3, EXP4, and EXP4.P. Simulation results demonstrated that \canony outperforms these algorithms by achieving more than two orders of magnitude faster convergence, a fair allocation, and higher average payoff. The aforementioned gains suggest that human-inspired conventions may prove beneficial in other ad-hoc coordination scenarios as well.

\appendix
\section{Appendix}

\subsection{Proof of Bound (\ref{Eq: Convergence Steps Theorem 12})} \label{Convergence system proof}

In this section we provide a formal proof of bound (\ref{Eq: Convergence Steps Theorem 12}). The proof is an adaptation of the convergence proof of \cite{cigler2013decentralized}. We will prove the following theorem:

\begin{theorem}
	For $N$ agents and $R \geq 1$, $K \geq 1$, $0 < p < 1$ the expected number of steps before the learning algorithm converges to an efficient correlated equilibrium of the allocation game for every $k \in \mathcal{K}$ is $\mathcal{O}\left( \left( K \log K + 2K \right) R \frac{2 - p}{2 (1 - p)} \left(\frac{1}{p} \log N + R \right) +1 \right)$.
\end{theorem}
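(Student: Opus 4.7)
The plan is to reduce Alg.~\ref{algo: canony} to $K$ independent copies of the non-contextual allocation protocol of \cite{cigler2013decentralized}, and then aggregate their convergence times. First I would observe that each agent maintains a separate entry $g_n(k)$ of its strategy for every $k \in \mathcal{K}$, and that at time $t$ only the entry $g_n(k_t)$ is consulted and (possibly) mutated; no feedback crosses between entries, and no hidden state couples them. Consequently, if one projects the execution onto the subsequence of rounds in which the context equals a fixed value $k$, the dynamics on the per-context strategies $\{g_n(k)\}_{n \in \mathcal{N}}$ are stochastically identical to a single-context run of the Cigler--Faltings algorithm with $N$ agents, $R$ resources, and back-off probability $p$. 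Because contexts are visited periodically with period $K$, these $K$ projected processes are mutually independent.

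Next I would invoke (an adaptation of) Theorem~12 of \cite{cigler2013decentralized} to conclude that a single-context instance reaches an efficient correlated equilibrium in an expected number $T_1 = \mathcal{O}\!\left( R \frac{2-p}{2(1-p)} \left( \frac{1}{p} \log N + R \right) \right)$ of local (per-context) steps. To pass from this to the multi-context bound I need the \emph{maximum} of $K$ independent copies of this hitting time rather than its mean. The standard route is to promote the expected bound to a geometric tail estimate: slice time into consecutive windows of length proportional to $T_1$ and use the Markov inequality on each window to show, via the Markov property of the underlying process, that $\Pr[T^{(k)} > c\, T_1\, m] \leq 2^{-m}$ for a suitable constant $c$. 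A union bound across the $K$ independent per-context processes then gives $\mathds{E}\!\left[\max_k T^{(k)}\right] = \mathcal{O}(T_1 (\log K + 1))$ local steps.

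Finally I would translate local to global time. Since each context is played in exactly one of every $K$ global rounds, the number of global rounds needed until every per-context copy has converged equals $K \cdot \max_k T^{(k)}$, whose expectation is therefore $\mathcal{O}\bigl(K T_1 (\log K + 1)\bigr) = \mathcal{O}\!\bigl( (K \log K + 2K)\, R\, \tfrac{2-p}{2(1-p)}\bigl(\tfrac{1}{p}\log N + R\bigr)\bigr)$, matching the statement (the trailing additive constant absorbs the base cases $K=1$ and $N \leq R$, and the constants absorbed into the window length $c$).

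The principal obstacle I anticipate is precisely the promotion of the single-context \emph{expected} hitting time of \cite{cigler2013decentralized} to the geometric tail estimate needed for the union bound: Theorem~12 as stated only controls the mean, so I would have to revisit its proof and verify that the potential functions used there contract in expectation over every window of length $\Theta(T_1)$, which then couples with Markov's inequality on successive windows to yield the required exponential tail. A secondary subtlety that must be stated carefully, though it is essentially immediate, is the independence across contexts despite the shared population of agents: it holds because neither reward nor collision feedback is ever fed into a per-context slot other than the active one, so no information channel exists between distinct context processes and no hidden correlation can sneak into the union bound.
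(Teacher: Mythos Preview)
Your outer argument---decompose by context, promote the mean hitting time to a geometric tail via Markov's inequality on successive windows, then union-bound over the $K$ contexts---is a standard and workable route to the $K\log K$ factor, and is essentially how that factor already arises in \cite{cigler2013decentralized}. The genuine gap is at the inner step: you take $T_1 = \mathcal{O}\bigl(R\,\tfrac{2-p}{2(1-p)}(\tfrac{1}{p}\log N + R)\bigr)$ as given by ``an adaptation of Theorem~12,'' but the factor $\tfrac{2-p}{2(1-p)}$ is \emph{not} what Theorem~12 of \cite{cigler2013decentralized} delivers. That specific $p$-dependence is precisely the refinement this theorem asserts over the original bound, and nothing in your context decomposition produces it.

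The paper obtains it by working inside the single-resource Markov chain on $\{0,1,\dots,N\}$ (the number of agents still contending). The key new lemma, proved by induction on the state $i$, shows that from any $i\ge 1$ the probability of absorption at state~$1$ (a single winner remains) rather than state~$0$ (everyone backed off, forcing a restart) is at least $\tfrac{2(1-p)}{2-p}$, with equality at $i=2$. Plugging this sharper absorption probability into the existing convergence machinery of \cite{cigler2013decentralized} is what yields the claimed $p$-dependence; the $(K\log K + 2K)$ factor is inherited unchanged from there. So the heart of the proof is the hitting-probability lemma, not the context aggregation, and without it your $T_1$ is unsupported. (A secondary point: your independence claim across contexts is also not quite right, since the flag $accessed_n$ in Alg.~\ref{algo: canony} is shared within an episode---an agent who has already succeeded will not access in later contexts of that episode---so the projected processes are coupled, though in a direction that can only help convergence.)
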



We begin with the case of having multiple agents but only a single resource (R = 1). We will describe the execution of the proposed learning rule as a discrete time Markov chain (DTMC) \footnote{For an introduction on Markov chains see \cite{norris1998markov}}. In every time-step, each agent performs a Bernoulli trial with probability of `success' $1 - p$ (remain in the competition), and failure $p$ (back-off). When $N$ agents compete for a single resource, a state of the system is a vector $\{0, 1\}^N$ denoting the individual agents that still compete for that resource. But, since the back-off probability is the same for everyone, we are only interested in how many agents are competing and not which ones. Thus, in the single resource case ($R = 1$), we can describe the execution of the proposed algorithm using the following chain:

\begin{definition} \label{def: markov chain X}
	Let $\{X_t\}_{t \geq 0}$ be a DTMC on state space $S = \{0, 1, \ldots ,N\}$ denoting the number of agents still competing for the resource. The transition probabilities are as follows:

	{\small
	\begin{align*}
		& Pr(X_{t + 1} = N | X_t = 0) = 1 & \text{restart} \\
		& Pr(X_{t + 1} = 1 | X_t = 1) = 1 & \text{absorbing} \\
		& Pr(X_{t + 1} = j | X_t = i) = \binom{i}{j} p^{i - j} (1 - p)^j & i > 1, j \leq i \\
	\end{align*}}
	\noindent
	(all the other transition probabilities are zero)
\end{definition}

Intuitively, this Markov chain describes the number of individuals in a decreasing population, but with two caveats: The goal (absorbing state) is to reach a point where only one individual remains, and if we reach zero, we restart.

\begin{lemma}{\cite{cigler2013decentralized}} \label{lm: hitting time of X}
	Let $A = \{0, 1\}$. The expected hitting time of the set of states $A$ in the Markov chain described in Definition \ref{def: markov chain X} is $\mathcal{O}\left( \frac{1}{p} \log N \right)$.
\end{lemma}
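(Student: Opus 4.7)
The plan is to couple $\{X_t\}$ with $N$ independent Bernoulli survival processes. First I would observe that before hitting $A$ the chain lives in $\{2,\dots,N\}$ and evolves as $X_{t+1}\sim\mathrm{Binomial}(X_t,1-p)$; the restart transition from $0$ to $N$ never triggers since $0\in A$, and the absorbing state $1$ is also in $A$. This binomial transition is exactly what one gets if each of the $X_t$ currently-alive agents independently flips a coin and survives with probability $1-p$. Extending the coupling all the way back to $t=0$, attach to every initial agent $j\in\{1,\dots,N\}$ its own i.i.d.\ stream of Bernoulli$(1-p)$ coins; then $j$ is alive at time $t$ iff its first $t$ coins all show success. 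With this representation the number of survivors is $S_t\sim\mathrm{Binomial}(N,(1-p)^t)$ and the hitting time of $A$ is $T=\min\{t:S_t\le 1\}$.

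Next I would bound the survival probability. Since $\{T>t\}=\{S_t\ge 2\}$, a union bound over the $\binom{N}{2}$ pairs of agents that could both still be alive at time $t$ gives
\[
\Pr[T>t]\;\le\;\binom{N}{2}(1-p)^{2t}\;\le\;\tfrac{1}{2}N^2(1-p)^{2t}.
\]

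Finally I would integrate the tail via $\mathds{E}[T]=\sum_{t\ge 0}\Pr[T>t]$, splitting at $t^\star=\lceil(\log N)/(-\log(1-p))\rceil$, the smallest index for which $(1-p)^{2t^\star}\le 1/N^2$. The first $t^\star$ terms contribute at most $t^\star$ in total (each probability is trivially $\le 1$), while the geometric tail from $t^\star$ onward sums to at most $\tfrac12/(1-(1-p)^2)=O(1/p)$, using the first-term bound $\tfrac12$ and common ratio $(1-p)^2$. The elementary inequality $-\log(1-p)\ge p$ then yields $t^\star=O((\log N)/p)$, so $\mathds{E}[T]=O((\log N)/p)$, matching the claim.

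The main obstacle, if any, is simply spotting that the restart and absorption transitions are irrelevant to the stopping time $T$ because both their destination states lie in $A$ itself; once that is noticed, the trajectory of $\{X_t\}$ up to $T$ is purely the multiplicative-decrease regime, which coincides in distribution with $N$ independent geometric survival times. The rest is a routine union-bound plus geometric-series estimate, and I do not expect any real difficulty beyond this initial observation.
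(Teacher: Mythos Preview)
Your argument is correct. The coupling with $N$ independent geometric lifetimes is exactly right: prior to hitting $A=\{0,1\}$ the chain performs pure binomial thinning, so $X_t$ coincides with the survivor count $S_t\sim\mathrm{Binomial}(N,(1-p)^t)$, and your tail bound via the pair union bound and the split at $t^\star$ goes through cleanly. The only implicit assumption you are making is $X_0=N$, which is the worst starting state by a monotone coupling, so the bound holds uniformly over initial states.

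As for comparison with the paper: note that this lemma is not proved here at all --- it is quoted verbatim from Cigler and Faltings \cite{cigler2013decentralized}, as the citation attached to the lemma indicates. The appendix of the present paper only supplies a new, sharper hitting-\emph{probability} bound (Lemma~\ref{lm: probability of entering state 1 Y}) and then plugs both lemmas into the convergence machinery of \cite{cigler2013decentralized}. So your write-up is actually a self-contained proof of a result the paper merely imports; it would be a welcome addition, not a duplication.
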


\begin{definition} \label{def: markov chain Y}
	Let $\{Y_t\}_{t \geq 0}$ be a DTMC on state space $S = \{0, 1, \ldots ,N\}$ with the following transition probabilities (two absorbing states, 0 and 1):

	{\small
	\begin{align*}
		& Pr(Y_{t + 1} = 0 | Y_t = 0) = 1 & \text{absorbing} \\
		& Pr(Y_{t + 1} = 1 | Y_t = 1) = 1 & \text{absorbing} \\
		& Pr(Y_{t + 1} = j | Y_t = i) = \binom{i}{j} p^{i - j} (1 - p)^j & i > 1, j \leq i \\
	\end{align*}}
	\noindent
	(all the other transition probabilities are zero)
\end{definition}

Let $h_i^A$ denote the hitting probability of a set of states $A$, starting from state $i$. We will prove the following lemma.

\begin{lemma} \label{lm: probability of entering state 1 Y}
	The hitting probability of the absorbing state $\{1\}$, starting from any state $i \geq 1$, of the DTMC $\{Y_t\}$ of Definition \ref{def: markov chain Y} is given by Eq. \ref{Eq: hitting probability bound}. This is a tight lower bound.

	{\small
	\begin{equation} \label{Eq: hitting probability bound}
		h_i^{\{1\}} = \Omega\left( \frac{2(1 - p)}{2 - p} \right), \forall i \geq 1
	\end{equation}}
\end{lemma}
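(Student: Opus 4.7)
My plan is to work directly with the standard linear system for hitting probabilities in a Markov chain. Writing $h_i := h_i^{\{1\}}$, and using the absorbing conditions $h_0 = 0$, $h_1 = 1$, the transition law in Definition~\ref{def: markov chain Y} gives, for every $i \geq 2$,
\begin{equation*}
  h_i \;=\; \sum_{j=0}^{i} \binom{i}{j} p^{i-j}(1-p)^j \, h_j
  \;=\; i(1-p)p^{i-1} \;+\; (1-p)^i h_i \;+\; \sum_{j=2}^{i-1}\binom{i}{j}p^{i-j}(1-p)^j h_j,
\end{equation*}
so that
\begin{equation*}
  h_i\bigl(1-(1-p)^i\bigr) \;=\; i(1-p)p^{i-1} \;+\; \sum_{j=2}^{i-1}\binom{i}{j}p^{i-j}(1-p)^j h_j.
\end{equation*}
First I would solve the $i=2$ case in closed form. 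There the sum is empty, yielding $h_2\bigl(1-(1-p)^2\bigr) = 2p(1-p)$, i.e.\ $h_2 = \frac{2p(1-p)}{p(2-p)} = \frac{2(1-p)}{2-p}$. This already establishes tightness of the claimed lower bound, and also gives $h_1 = 1 \geq h_2$ trivially since $2(1-p) \leq 2-p$.

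Next I would prove by induction on $i \geq 2$ that $h_i \geq h_2$. Assuming $h_j \geq h_2$ for all $2 \leq j \leq i-1$, substitute into the recursion and use $\sum_{j=0}^{i}\binom{i}{j}p^{i-j}(1-p)^j = 1$ to replace the residual sum:
\begin{equation*}
  h_i\bigl(1-(1-p)^i\bigr) \;\geq\; i(1-p)p^{i-1} \;+\; h_2 \Bigl(1 - p^i - i(1-p)p^{i-1} - (1-p)^i\Bigr).
\end{equation*}
Rearranging, the inequality $h_i \geq h_2$ reduces after cancellation to
\begin{equation*}
  i(1-p)(1-h_2) \;\geq\; h_2\, p,
\end{equation*}
and substituting $h_2 = \tfrac{2(1-p)}{2-p}$ (so $1-h_2 = \tfrac{p}{2-p}$) collapses this to $i \geq 2$, which holds by hypothesis. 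This closes the induction and establishes $h_i \geq \frac{2(1-p)}{2-p}$ for every $i \geq 1$, with equality at $i=2$.

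The only mildly delicate step is the inductive bound: one must resist the temptation to keep all the $h_j$ terms separate and instead collapse them via the binomial identity so that the resulting inequality becomes a simple algebraic statement in $p$ and $i$. With that trick in hand, the argument is clean. The $h_0 = 0$ absorbing state plays no role beyond being excluded from the sum starting at $j=1$, so no extra care is needed there. Once Lemma~\ref{lm: probability of entering state 1 Y} is in place in this form, the $\Omega(\cdot)$ phrasing of the statement is immediate and the tightness witness at $i=2$ is explicit.
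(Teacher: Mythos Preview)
Your proposal is correct and follows essentially the same route as the paper: set up the standard linear system for hitting probabilities, compute $h_2 = \tfrac{2(1-p)}{2-p}$ explicitly, and then induct on $i$ by collapsing the intermediate $h_j$'s via the binomial identity, so that the remaining inequality reduces algebraically to $i \geq 2$. The paper carries the algebra a bit further before cancelling (obtaining $(2-i)(1-p) \leq 0$), whereas you substitute $1-h_2 = \tfrac{p}{2-p}$ earlier and reach $i \geq 2$ more directly, but the two arguments are the same in substance and both identify $i=2$ as the tightness witness.
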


\begin{proof}
	For simplicity we denote $h_i \overset{\Delta}{=} h_i^{\{1\}}$. We will show that for $p \in (0, 1)$, $h_i \geq \lambda = \frac{2(1 - p)}{2 - p}, \forall i \geq 1$ using induction. First note that since state $\{0\}$ is an absorbing state, $h_0 = 0$, $h_1 = 1 \geq \lambda$ and that $\lambda \in (0, 1)$.

	The vector of hitting probabilities $h^A = (h_i^A: i \in S = \{0, 1, \ldots ,N\})$ for a set of states $A$ is the minimal non-negative solution to the system of linear equations \ref{hitting probabilities system}:

	{\small
	\begin{equation} \label{hitting probabilities system}
		\begin{cases}
			h_i^A = 1, &\text{ if } i \in A \\
			h_i^A = \underset{j \in S}{\sum} p_{ij}h_j^A, &\text{ if } i \notin A
		\end{cases}
	\end{equation}}

	By replacing $p_{ij}$ with the probabilities of Definition \ref{def: markov chain Y}, the system of equations \ref{hitting probabilities system} becomes:

	{\small
	\begin{equation} \label{hitting probabilities}
		\begin{cases}
			h_i^A = 1, &\text{ if } i \in A \\
			h_i^A = \underset{j = 0}{\overset{i}{\sum}} \binom{i}{j} p^{i - j} (1 - p)^j h_j^A, &\text{ if } i \notin A
		\end{cases}
	\end{equation}}

	\noindent
	Base case:

	{\small
	\begin{equation*}
		h_2 = (1 - p)^2 h_2 + 2p(1 -p) h_1 + p^2 h_0 = \frac{2p(1 - p)}{1 - (1 - p)^2} = \frac{2(1 - p)}{2 - p}\geq \lambda
	\end{equation*}}

	\noindent
	Inductive step: We assume that $\forall j \leq i - 1 \Rightarrow h_j \geq \lambda$. We will prove that $h_i \geq \lambda, \forall i > 2$.

	{\small
	\begin{align*}
		h_i &= \underset{j = 0}{\overset{i}{\sum}} \binom{i}{j} p^{i - j} (1 - p)^j h_j \\
		&= p^i h_0 + i p^{i - 1} (1 - p) h_1 + \underset{j = 2}{\overset{i - 1}{\sum}} \binom{i}{j} p^{i - j} (1 - p)^j h_j + (1 - p)^i h_i \\
		&\geq p^i h_0 + i p^{i - 1} (1 - p) h_1 + \underset{j = 2}{\overset{i - 1}{\sum}} \binom{i}{j} p^{i - j} (1 - p)^j \lambda + (1 - p)^i h_i \\
		&= i p^{i - 1} (1 - p) + [1 - p^i - (1 - p)^i - i p^{i - 1} (1 - p)] \lambda + (1 - p)^i h_i \\
		\Rightarrow h_i &= \lambda - \frac{p^i}{1 - (1 - p)^i} \lambda + \frac{i p^{i - 1} (1 - p)}{1 - (1 - p)^i} (1 - \lambda)
	\end{align*}}

	\noindent
	We want to prove that $h_i \geq \lambda$:

	{\small
	\begin{align*}
		\frac{i p^{i - 1} (1 - p)}{1 - (1 - p)^i} (1 - \lambda) &\geq \frac{p^i}{1 - (1 - p)^i} \lambda \Rightarrow \\
		\frac{i p^{i - 1} (1 - p) + p^i - p^i}{p^i + i p^{i - 1} (1 - p)} &\geq \lambda \Rightarrow \\
		1 - \frac{p^i}{p^i + i p^{i - 1} (1 - p)} &\geq \frac{2(1 - p)}{2 - p} \Rightarrow \\
		\frac{p^i}{p^i + i p^{i - 1} (1 - p)} &\leq \frac{p}{2 - p} \Rightarrow \\
		p^i (2 - p) &\leq p [p^i + i p^{i - 1} (1 - p)] \Rightarrow \\
		2 - 2p - i + ip &\leq 0 \Rightarrow
		2 - i - p(2 - i) \leq 0 \Rightarrow \\
		(2 - i)(1 - p) &\leq 0 \Rightarrow
		2 - i \leq 0 
	\end{align*}}

	\noindent
	which holds since $i > 2$. The above bound is also tight since $\exists i \in S: h_i = \lambda$, specifically $h_2 = \lambda$.
\end{proof}

Plugging the above hitting probability bound to the convergence proof of \cite{cigler2013decentralized}, results on the required bound.

\bibliographystyle{ACM-Reference-Format}  
\balance  
\bibliography{bibliography}

\end{document}